\documentclass[ARXIV,preprint]{imsart}

\RequirePackage[OT1]{fontenc}
\RequirePackage{amsthm,amsmath}
\RequirePackage[colorlinks,citecolor=blue,urlcolor=blue]{hyperref}

\usepackage{bbm}
\usepackage{algorithm2e}
\usepackage{enumitem}
\usepackage{multirow}
\usepackage{graphicx}
\usepackage{subcaption}


\usepackage{ amssymb }
\newcommand{\abs}[1]{\ensuremath{\bigl| #1 \bigr|}}

\newcommand{\normtwo}[1]{\ensuremath{|\!|#1|\!|_{2}}}
\newcommand{\normtwos}[1]{\ensuremath{|\!|#1|\!|_{2}^{2}}}

\usepackage{cite}

\makeatletter
\renewcommand{\@biblabel}[1]{(#1)}
\makeatother

\newcommand{\outcome}{\ensuremath{\boldsymbol{y}}}
\newcommand{\corr}[2]{\ensuremath{c\bigl[#1,#2\bigr]}}
\newcommand{\tuningset}{\ensuremath{\mathcal{R}}}

\newcommand{\identity}{\ensuremath{\boldsymbol{I}}}
\newcommand{\real}[1]{\ensuremath{\mathbb{R}^{#1}}}
\newcommand{\optimal}{\ensuremath{r^{*}}}
\newcommand{\tuning}{\ensuremath{r}}
\newcommand{\ridgetuning}{\ensuremath{t}}
\newcommand{\ridgetuningset}{\ensuremath{\mathcal{T}}}
\newcommand{\ridgetuningsetmapped}{\ensuremath{\tuningset_\phi}}

\newcommand{\noise}{\ensuremath{\boldsymbol{u}}}
\newcommand{\leftsingular}{\ensuremath{\boldsymbol{U}}}

\newcommand{\rightsingular}{\ensuremath{\boldsymbol{V}}}

\newcommand{\diagnal}{\ensuremath{\boldsymbol{D}}}
\newcommand{\diagnalinv}{\ensuremath{\boldsymbol{D^{\dagger}}}}
\newcommand{\singularvalue}{\ensuremath{d}}
\newcommand{\design}{\ensuremath{\boldsymbol{X}}}
\newcommand{\testing}{\ensuremath{\boldsymbol{z}}}

\newcommand{\target}{\ensuremath{\boldsymbol{\beta}}}
\newcommand{\true}{\ensuremath{\boldsymbol{\beta}^{*}}}
\newcommand{\estimator}{\ensuremath{\boldsymbol{\hat{\beta}}}}
\newcommand{\regularizedestimator}[2]{\ensuremath{\boldsymbol{\hat{\beta}}_{#1}\bigl[#2\bigr]}}

\newcommand{\avtuning}{\ensuremath{\hat{\tuning}}}
\newcommand{\avtuningridge}{\ensuremath{\hat{\ridgetuning}}}
\newcommand{\avtuningset}{\ensuremath{\tuningset_{A}}}
\newcommand{\tuningone}{\ensuremath{\tuning^{\prime}}}
\newcommand{\tuningtwo}{\ensuremath{\tuning^{\prime\prime}}}
\newcommand{\oracletuning}{\ensuremath{{\tuning_{\vspace{-0.3mm}o}}}}
\newcommand{\bitesting}{\ensuremath{\hat{s}}}
\newcommand{\tuningmapping}[1]{\ensuremath{\phi [ #1 ]}}

\newcommand{\errorprobability}{\ensuremath{\delta}}
\newcommand{\gaussiantuning}{\ensuremath{\tuning_\errorprobability}}

\newcommand{\nameedr}{\ensuremath{\operatorname{edr}}}
\newcommand{\nameridge}{\ensuremath{\operatorname{ridge}}}
\newcommand{\pavedr}{\ensuremath{\operatorname{PAV_{\nameedr}}}}

\DeclareMathOperator*{\argmin}{arg\,min}

\newcommand{\probability}[1]{\ensuremath{\mathbb{P}\bigl\{ #1 \bigr\} }}

\newcommand{\variance}[1]{\ensuremath{\operatorname{Var}\bigl[ #1 \bigr] }}
\newcommand{\normalDist}[2]{\ensuremath{\mathcal{N}[#1,#2]}}

\numberwithin{equation}{section}
\theoremstyle{plain}
\newtheorem{thm}{Theorem}[section]
\newtheorem{lemma}{Lemma}[section]

\newtheorem{example}{Example}[section]

\newtheorem{definition}{Definition}[section]








\newcommand{\ra}[1]{\renewcommand{\arraystretch}{#1}}

\startlocaldefs
\numberwithin{equation}{section}
\theoremstyle{plain}
\endlocaldefs

\begin{document}

\begin{frontmatter}
\title{Tuning parameter calibration for prediction in personalized medicine}
\runtitle{Tuning parameter calibration for prediction in personalized medicine}

\begin{aug}
\author{\fnms{Shih-Ting} \snm{Huang}\thanksref{m1}\ead[label=e1]{shih-ting.huang@rub.de}},
\author{\fnms{Yannick} \snm{D\"uren}\thanksref{m1}\ead[label=e2]{yannick.dueren@rub.de}},
\author{\fnms{Kristoffer H.} \snm{Hellton}\thanksref{m2}\ead[label=e3]{hellton@nr.no},
\ead[label=u1,url]{www.mn.uio.no/math/english/people/aca/kristohh/}}
\and
\author{\fnms{Johannes} \snm{Lederer}\thanksref{m1}
\ead[label=e4]{johannes.lederer@rub.de},\ead[label=u2,url]{www.johanneslederer.com}
}

\runauthor{S.-T. Huang et al.}

\affiliation{Ruhr-University Bochum\thanksmark{m1} and Norwegian Computing Center\thanksmark{m2}}

\address{Ruhr-University Bochum\\
Universit\"atsstra{\ss}e 150\\
44801 Bochum \\
Germany \\
\printead{e1}\\
\phantom{E-mail:\ }\printead*{e2}\\
\phantom{E-mail:\ }\printead*{e4}\\
\printead{u2}}

\address{Norwegian Computing Center\\
P.O.\@ Box 114 Blindern \\
0314 Oslo \\
Norway \\
\printead{e3}\\
\printead{u1}}
\end{aug}

\begin{abstract}
Personalized medicine is becoming an important part of medicine, for instance predicting individual drug responses or risk of complications based on genomic information. However, many current statistical methods are not tailored to this task, because they overlook the individual heterogeneity of patients. In this paper, we look at personalized medicine from a linear regression standpoint. We introduce an alternative version of the ridge estimator and target individuals by establishing a tuning parameter calibration scheme that minimizes prediction errors of individual patients. In stark contrast, classical schemes such as  cross-validation minimize prediction errors only on average. We show that our pipeline is optimal in terms of oracle inequalities, fast, and highly effective both in simulations and on real data.
\end{abstract}


\begin{keyword}
\kwd{personalized medicine}
\kwd{precision medicine}
\kwd{personalized prediction}
\kwd{tuning parameter calibration}
\kwd{euclidean distance ridge}
\kwd{ridge regression}
\kwd{linear regression}
\kwd{regularization}
\end{keyword}

\end{frontmatter}

\section{Introduction}

In the last decade, improvements in genomic, transcriptomic, and proteomic technologies have enabled personalized  medicine (also called precision medicine) to become an essential part 
of contemporary medicine.
%
Personalized medicine takes into account individual variability in genes, proteins, environment, and lifestyle to decide on optimal disease treatment and prevention~\cite{hamburg2010path}. The use of a patient's genetic and epigenetic information has already proven to be highly effective to tailor drug therapies or preventive care in a number of applications, such as breast cancer ~\cite{cho2012personalized}, prostate cancer~\cite{nam2007assessing}, ovarian cancer~\cite{hippisley2015development}, and pancreatic cancer~\cite{ogino2011cancer}, cardiovascular disease~\cite{ehret2011genetic}, cystic fibrosis~\cite{waters2018human}, and psychiatry~\cite{demkow2017genetic}. 
The subfield of pharmacogenomics studies specifically how genes affect a person's response to particular drugs to develop more efficient and safer  medications~\cite{ziegler2012personalized}. 


Genomic, epigenomic, and transcriptomic data used in precision medicine, such as gene expression, copy number variants, or methylation levels are typically high-dimensional with a number of variables that rivals or exceeds the number of observations. Using such data to estimate and predict treatment response or risk of complications, therefore, requires regularization typically by the $\ell_1$ norm (lasso), the $\ell_2$ norm (ridge), 
or other terms. Ridge regression~\cite{hoerl1970ridge} yields good predictive performance for dense or non-sparse effects, that is, for outcomes related to systemic conditions, as the method does not perform variable selection. Ridge regression has become a standard tool for prediction based on genomic data, and it has been shown that ridge regression can outmatch competing prediction methods for survival based on gene expression~\cite{bovelstad2007predicting,cule2013ridge}.

However, regularization always introduces one or more tuning parameters. These tuning parameters are usually calibrated based on the averaged prediction risks. 
Most commonly used, $K$-fold cross-validation (CV) divides the data into $K$~folds (typically $K \in \{5, 10 \}$), predicts each fold out-of-sample, averages over all folds for a range of tuning parameters, and selects the value with the lowest averaged error~\cite{stone1974cross,golub1979generalized}. 
But the averaging removes the inherent individual heterogeneity of the patients and can, therefore, result in sub-optimal prediction performance. This may ultimately lead to unsuitable treatment, administration of improper medication with adverse side effects, or lack of preventive care~\cite{hamburg2010path}. 

Hence, rather than minimizing an averaged prediction error, our goal is to minimize each patient's individual (``personalized'') prediction error. A na\"ive two-stage personalized procedure for ridge regression was recently proposed by~\cite{hellton2018fridge}. 
In this paper, we introduce an alternative ridge estimator, referred to as euclidean distance ridge (\nameedr), and calibrate the tuning parameter using adaptive validation~\cite{lepskii_1992, spokoiny_mammen_lepski_1997} \textit{individually} for each patient.
We show that this approach offers compelling theory
, fast computations
, and accurate prediction on data. 

The specific motivation for our method is to unravel the relationship between gene expression and weight gain in kidney transplant recipients~\cite{cashion2013expression}.
Kidney transplant recipients are known to often gain substantial weight during the first year after transplantation, which can result in adverse health effects~\cite{patel1998effect}. 
Individual predictions of this weight gain based on the genetic data could help in providing each patient with the best possible care.

 


The remainder of this paper is organized as follows:
We introduce the linear regression framework and the problem statement in Section~\ref{sec:ProblemSetup}. 
We then introduce the main methodology of our approach, and present theoretical guarantees in Section~\ref{sec:Methodology}. 
In addition, we discuss the algorithm and analyze its performance through simulation studies using synthetic and real data in Section~\ref{sec:Algorithm}. 
We further apply our pipeline to kidney transplant data in Section~\ref{sec:Application}.
Finally, we discuss the results in Section~\ref{sec:Conclusion} and we defer all proofs to the Appendix.
All data are publicly available and our code is available at \url{https://github.com/LedererLab/personalized_medicine}.

\section{Problem Setup}
\label{sec:ProblemSetup}

We consider data ($\outcome, \design$) that follows a linear regression model 
\begin{equation}
\label{eq:LinearModel}
    \outcome = \design\true + \noise.
\end{equation}
Let $p$ denote the number of parameters, e.g.\ genes or genetic probes, and $n$ the number of samples or patients, then $\outcome \in \real{n}$ is the vector of outcomes, $y_i$, for example, a person’s response to treatment. We let~$\design$ denote the design matrix, where each row $\mathbf{x}_i\in\real{p}$, $i\in\{1,\dots n\}$, contains the genome information of the corresponding person.
%
Each element~$\beta^*_j$, $j\in\{1,\dots p\}$, of the regression vector~$\true\in\real{p}$ models the gene's influence on the person's response. 
We ensure the uniqueness of~$\true$ by assuming that it is a projection onto the linear space generated by the~$n$ rows of~$\design$~\cite{Shao2012,Buhlmann2013}.
For the random error vector ~$\noise \in \real{n}$, we make no assumptions on the probability distribution.

Our goal is to estimate the regression vector~$\true$ from data~$(\outcome,\design)$, or in terms of our application, predicting a person's treatment response based on that person's genome information. Mathematically, this amounts to estimating $\testing^\top\true$ in terms of the personalized prediction error
\begin{equation}
\label{eq:SingleError}
    \bigl|\testing^{\top}(\true-\estimator)\bigr|,
\end{equation} 
where $\testing \in \real{p}$ is the person's genome information.

Since the data in precision medicine is typically high-dimensional, that is, the number of parameters (genes)~$p$ exceeds the number of samples (patients)~$n$, 
we consider regularized least-squares estimators of the form 
\begin{equation}
\label{eq:RegularizedLeastSquare}
    \regularizedestimator{}{\tuning} \in \argmin_{\target \in \real{p}} \bigg\{ \normtwo{\outcome - \design\target}^{2} + \tuning \cdot f[\target] \bigg\}.
\end{equation}
Here, $f$ denotes a  function that takes into account prior information, such as sparsity or smaller regression coefficients,
and the tuning parameter~$r \geq 0$ balances the least-squares term and the prior term.

Given an estimator~\eqref{eq:RegularizedLeastSquare},
the main challenge is to find a good tuning parameter in line with our statistical goal. 
This means that we want to mimic the tuning parameter
\begin{equation*}
\label{eq:OptimalTunningParameter}
    \optimal := \argmin_{\tuning \in \tuningset} \Bigl| \testing^{\top}(\true-\regularizedestimator{}{\tuning})\Bigr|,
\end{equation*}
which is the optimal tuning parameter in a given set of candidate parameters $\mathcal{R} :=\{ \tuning_{1}, \tuning_{2},\dots ,\tuning_{m}\}$.

The optimal tuning parameter~$\optimal$ depends on the family of estimators~\eqref{eq:RegularizedLeastSquare}, the unknown noise~$\noise$, and the patient's genome information~$\testing$.
The dependence on~$\testing$ is integral to personalized medicine:
different patients can respond very differently to the same treatment. 
But standard tuning parameter calibration such as CV schemes do not take this personalization into account but instead 
attempt to minimize the averaged prediction error $\normtwo{\design\true-\design\regularizedestimator{}{\tuning}}^2/n$ rather than the personalized prediction error $\abs{\testing^{\top}(\true-\regularizedestimator{}{\tuning})}$.
We, therefore, develop a new prediction pipeline, that is tailored to the personalized prediction error and equip our methods with fast algorithms and sharp guarantees.


\section{Methodology}
\label{sec:Methodology}

In this section, we introduce an alternative version of the ridge estimator~\cite{hoerl1970ridge} along with a calibration scheme tailored to personalized medicine.
Two distinct features of the pipeline are its finite-sample bounds and its computational efficiency.
Our estimator is called \emph{euclidean distance ridge} (\emph{$\nameedr$}) and is defined as 
\begin{equation}
\label{eq:SPR}
    \regularizedestimator{\nameedr}{\tuning} \: \in \: \argmin_{\target \in \real{p}} \bigg\{\normtwo{\outcome - \design\target}^{2} + \tuning \normtwo{\target}\bigg\}.
\end{equation}

The \nameedr\ replaces the ridge estimator's squared $\ell_2$ prior term $f_{\operatorname{ridge}}[\target]\equiv\normtwos{\target}$ by its square-root $f_{\nameedr}[\target]\equiv\sqrt{f_{\operatorname{ridge}}[\target]}\equiv\normtwo{\target}$.
This modification allows us to derive finite-sample oracle inequalities that can be leveraged for tuning parameter calibration.
At the same time, the  $\nameedr$  preserves two of the ridge estimator's most attractive features: 
it can model the influences of many parameters, and it can be computed without the need for elaborate descent algorithms (see Section~\ref{sec:Algorithm}).

Our first step is to establish finite-sample guarantees for the $\nameedr$.
The key idea is that if the tuning parameter is large enough, the personalized prediction error~\eqref{eq:SingleError} is bounded by a multiple of the tuning parameter.
For ease of presentation, we assume an orthonormal design, that is,  $\design^{\top}\design = ~\identity_{p \times p}$ and defer the discussion of correlated covariates to the Appendix~\ref{appendix:orthonormal}.
However, simulations with more general designs are carried out in Section~\ref{sec:Algorithm}.
We establish the following guarantee for \nameedr:
\begin{lemma}[Oracle inequality for $\nameedr$]
\label{lem:OracleSPR}
If $\tuning\geq 2\abs{(\design\testing)^\top\noise}/(\corr{\testing}{\tuning}\normtwo{\testing})$, where
\begin{equation*}
    \corr{\testing}{\tuning} :=  \frac{\abs{\testing^{\top}\regularizedestimator{\nameedr}{\tuning}}}{\normtwo{\testing}\normtwo{\regularizedestimator{\nameedr}{\tuning}}} \quad\in[0,1],
\end{equation*}
then it holds for orthonormal design that 
\begin{equation*}
    \bigl|\testing^{\top}(\true - \regularizedestimator{\nameedr}{\tuning})\bigr| \leq \corr{\testing}{\tuning} \cdot \normtwo{\testing} \:\cdot\: \tuning.
\end{equation*}
    
\end{lemma}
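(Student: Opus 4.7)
The plan is to exploit the orthonormal design to solve the first-order optimality (KKT) condition for \nameedr\ essentially in closed form, which will reduce the lemma to a single application of the triangle inequality together with the assumption on $\tuning$. Since $\target\mapsto\normtwo{\target}$ is convex and differentiable away from the origin with gradient $\target/\normtwo{\target}$, and the least-squares term is convex quadratic, the first-order optimality condition for any minimizer $\regularizedestimator{\nameedr}{\tuning}\neq\boldsymbol{0}$ of the objective in~\eqref{eq:SPR} reads
\begin{equation*}
    -2\design^{\top}\bigl(\outcome-\design\regularizedestimator{\nameedr}{\tuning}\bigr)+\tuning\,\frac{\regularizedestimator{\nameedr}{\tuning}}{\normtwo{\regularizedestimator{\nameedr}{\tuning}}}=\boldsymbol{0}.
\end{equation*}
The degenerate case $\regularizedestimator{\nameedr}{\tuning}=\boldsymbol{0}$ forces $\corr{\testing}{\tuning}=0$, so the hypothesis on $\tuning$ cannot be met there and the statement is vacuous; I would mention this case briefly and then work exclusively with $\regularizedestimator{\nameedr}{\tuning}\neq\boldsymbol{0}$.

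Next I would substitute the model $\outcome=\design\true+\noise$ and use $\design^{\top}\design=\identity$ to convert the KKT equation into the explicit identity
\begin{equation*}
    \true-\regularizedestimator{\nameedr}{\tuning}=\frac{\tuning}{2}\,\frac{\regularizedestimator{\nameedr}{\tuning}}{\normtwo{\regularizedestimator{\nameedr}{\tuning}}}-\design^{\top}\noise.
\end{equation*}
Contracting with $\testing^{\top}$, applying the triangle inequality, and recognizing that $|\testing^{\top}\regularizedestimator{\nameedr}{\tuning}|/\normtwo{\regularizedestimator{\nameedr}{\tuning}}=\corr{\testing}{\tuning}\normtwo{\testing}$ directly from the definition of $\corr{\testing}{\tuning}$ then gives
\begin{equation*}
    \bigl|\testing^{\top}(\true-\regularizedestimator{\nameedr}{\tuning})\bigr|\leq\frac{\tuning}{2}\corr{\testing}{\tuning}\normtwo{\testing}+\bigl|(\design\testing)^{\top}\noise\bigr|.
\end{equation*}
The hypothesis $\tuning\geq 2|(\design\testing)^{\top}\noise|/(\corr{\testing}{\tuning}\normtwo{\testing})$ is then exactly what is needed to bound the noise term by a further $(\tuning/2)\,\corr{\testing}{\tuning}\normtwo{\testing}$, and summing the two equal halves delivers the claimed inequality.

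The main obstacle is essentially bookkeeping rather than analysis. The hypothesis is self-referential, because $\corr{\testing}{\tuning}$ itself depends on $\regularizedestimator{\nameedr}{\tuning}$, so the argument has to be arranged so that the same $\corr{\testing}{\tuning}$ that sits in the denominator of the $\tuning$-condition is exactly the one that reappears on the right-hand side of the bound, letting the awkward factor cancel cleanly. The only other care point is the non-smoothness of $\normtwo{\target}$ at the origin, which is disposed of by the vacuous-case remark above; everything else is standard convex-analytic manipulation that does not require any probabilistic input, since the assumption already absorbs the noise term.
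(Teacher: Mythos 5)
Your proposal is correct and follows essentially the same route as the paper's own proof: write down the KKT condition for the \nameedr{} objective, substitute the linear model and use $\design^{\top}\design=\identity_{p\times p}$ to isolate $\true-\regularizedestimator{\nameedr}{\tuning}$, contract with $\testing^{\top}$, and apply the triangle inequality together with the lower bound on $\tuning$ to absorb the noise term. Your additional remark that the case $\regularizedestimator{\nameedr}{\tuning}=\boldsymbol{0}$ makes the hypothesis vacuous is a small but worthwhile point of care that the paper's proof leaves implicit.
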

\noindent Such guarantees are usually called \emph{oracle inequalities}~\cite{lederer2019oracle}. 
The given oracle inequality is an ideal starting point for our pipeline, because it gives us a mathematical handle on the quality of tuning parameters:
a good tuning parameter should be large enough to meet the stated condition and yet small enough to give a sharp bound.
The original ridge estimator, however, lacks such inequalities for personalized prediction.
 
Our proof techniques, which are based on the optimality conditions of the estimator, also yield a similar bound for the original ridge estimator:
if $\ridgetuning\geq \abs{(\design\testing)^\top\noise}/\normtwo{\testing}$,
then $\abs{\testing^{\top}(\true - \regularizedestimator{\nameridge}{\ridgetuning})}\leq \abs{ 1 + \testing^{\top}\regularizedestimator{\nameridge}{\ridgetuning} / \normtwo{\testing}} \cdot \normtwo{\testing} \cdot \ridgetuning$.
The following pipeline can then be applied the same way as for the \nameedr.
But the crucial advantage of the \nameedr's bound is that its right-hand side is bounded by $\normtwo{\testing} \:\cdot\: \tuning$,
which ensures that the results do not scale with~\true.

The factor $\corr{\testing}{\tuning}$ can be interpreted as the absolute value of the correlation between the person's genome information~$\testing$ and the estimator~$\regularizedestimator{\nameedr}{\tuning}$. 
This factor, and therefore~\testing, are included in our calibration scheme below, and our pipeline, hence, optimizes the prediction for particular study subjects. 

Lemma~\ref{lem:OracleSPR} bounds the personalized prediction error of $\nameedr$ as a function of the tuning parameter~\tuning.
Given~\testing, the best tuning parameter in terms of the bound minimizes $\corr{\testing}{\tuning}\cdot\tuning$ over all tuning parameters, that satisfy the lower bound  
\begin{equation*}
\tuning\geq \frac{2\abs{(\design\testing)^\top\noise}}{\corr{z}{r}\normtwo{\testing}}.
\end{equation*}
This tuning parameter value, which we call the oracle tuning parameter, can be interpreted as the closest theoretical mimic of the optimal tuning parameter~\optimal.

\begin{definition}[Oracle tuning parameter for personalized prediction]
\label{def:OracleTuning}
Given a new person's genome information \testing, the oracle tuning parameter for personalized prediction in a candidate set \tuningset\ is given by
    \begin{equation*}
        \oracletuning \: \in \: \argmin_{\tuning\in \bar{\tuningset}} \biggl\{ \corr{\testing}{\tuning} \cdot\tuning  \biggr\}, 
        \text{ where }
        \overline{\tuningset} := \biggl\{ \tuning \in \tuningset : \tuning\geq\frac{2\abs{(\design\testing)^\top\noise}}{\corr{z}{r}\normtwo{\testing}}\biggr\}.
    \end{equation*}
\end{definition}
\noindent 
The oracle tuning parameter~\oracletuning\ is the best approximation of the optimal tuning parameter~$\optimal$ in view of the mathematical theory expressed by Lemma~\ref{lem:OracleSPR}.
In practice, however, one does not know the target~\true\ nor the noise~$\noise$ (typically not even its distribution), such that neither~\optimal\ nor \oracletuning\ are accessible. 

In the following our goal is, consequently, to match the prediction accuracy of~\oracletuning\ (and, therefore, of~\optimal\ essentially) with a completely data-driven scheme.
Our proposal is based on pairwise tests along the tuning parameter path:
\begin{definition}[$\pavedr$ : Personalized adaptive validation for $\nameedr$]
\label{def:AV}
  We select a tuning parameter~\avtuning\ by
    \begin{equation}
    \label{eq:PAVedrTuning}
        \avtuning \: \in \: \argmin_{\tuning\in\avtuningset} \bigg\{ \corr{\testing}{\tuning} \cdot\tuning \cdot \normtwo{\testing} \bigg\},
    \end{equation}
    where the set of admissible tuning parameters is
    \begin{equation*}
    \begin{split}
        \avtuningset := \Bigg\{ \tuning \in \tuningset \, \bigg| \, \max_{\substack{\tuningone,\tuningtwo\in\tuningset \\ \tuningone,\tuningtwo\geq\tuning}} \bigg[ & \abs{\testing^{\top}(\regularizedestimator{\nameedr}{\tuningone} - \regularizedestimator{\nameedr}{\tuningtwo})} \\
        &- (\corr{\testing}{\tuningone} \cdot\tuningone + \corr{\testing}{\tuningtwo} \cdot\tuningtwo)\normtwo{\testing} \: \leq 0 \bigg] \Bigg\}.
    \end{split} 
    \end{equation*}
\end{definition} 
\noindent The idea of using pairwise tests for tuning parameter calibration in high-dimensional statistics has been introduced by~\cite{Chichignoud2016} under the name \emph{adaptive validation}.
A difference here is that the factors $\corr{\testing}{\tuning}\cdot\tuning$ are not constant but depend both on~\tuning\ and~\testing. 
The dependence on~\testing\ in particular reflects our focus on \emph{personalized} prediction.

The following result guarantees that the data-driven choice $\avtuning$ indeed provides---up to a constant factor 3---the same performance as the oracle tuning parameter~\oracletuning.
\begin{thm}[Optimality for personalized  adaptive validation for \nameedr]
\label{thm:optimality}
Under the conditions of Lemma~\ref{lem:OracleSPR}, it holds that 
     \begin{equation*}
         \abs{\testing^{\top}(\true-\regularizedestimator{\nameedr}{\avtuning})} \leq 3 \: \corr{\testing}{\oracletuning}\cdot \normtwo{\testing} \:\cdot\: \oracletuning.
     \end{equation*}
\end{thm}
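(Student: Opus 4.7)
The plan is a standard adaptive-validation argument: I would bound $\abs{\testing^{\top}(\true - \regularizedestimator{\nameedr}{\avtuning})}$ by routing through the oracle $\oracletuning$ with the triangle inequality, control the ``oracle piece'' via Lemma~\ref{lem:OracleSPR}, and control the ``difference piece'' through the pairwise admissibility test that defines $\avtuningset$.

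The first key step is to show that $\oracletuning\in\avtuningset$. For any pair $\tuningone,\tuningtwo\in\tuningset$ with $\tuningone,\tuningtwo\geq\oracletuning$, the lower-bound hypothesis of Lemma~\ref{lem:OracleSPR} continues to hold (this is inherited from ``under the conditions of Lemma~\ref{lem:OracleSPR}''), so the lemma delivers $\abs{\testing^{\top}(\true-\regularizedestimator{\nameedr}{\tuningone})}\leq\corr{\testing}{\tuningone}\cdot\tuningone\cdot\normtwo{\testing}$ and analogously for $\tuningtwo$. A triangle inequality then gives
\begin{equation*}
\abs{\testing^{\top}(\regularizedestimator{\nameedr}{\tuningone} - \regularizedestimator{\nameedr}{\tuningtwo})} \leq \bigl(\corr{\testing}{\tuningone}\cdot\tuningone + \corr{\testing}{\tuningtwo}\cdot\tuningtwo\bigr)\normtwo{\testing},
\end{equation*}
which is precisely the test defining $\avtuningset$. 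Since $\avtuning$ minimizes $\corr{\testing}{\tuning}\cdot\tuning\cdot\normtwo{\testing}$ over $\avtuningset$ and $\oracletuning$ now lies in $\avtuningset$, we obtain $\corr{\testing}{\avtuning}\cdot\avtuning \leq \corr{\testing}{\oracletuning}\cdot\oracletuning$.

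The second step is to split via the triangle inequality
\begin{equation*}
\abs{\testing^{\top}(\true-\regularizedestimator{\nameedr}{\avtuning})} \leq \abs{\testing^{\top}(\true-\regularizedestimator{\nameedr}{\oracletuning})} + \abs{\testing^{\top}(\regularizedestimator{\nameedr}{\oracletuning}-\regularizedestimator{\nameedr}{\avtuning})}.
\end{equation*}
Lemma~\ref{lem:OracleSPR} applied at $\oracletuning$ bounds the first term by $\corr{\testing}{\oracletuning}\cdot\oracletuning\cdot\normtwo{\testing}$. For the second, whichever of $\oracletuning$ and $\avtuning$ is smaller still lies in $\avtuningset$ (either by the first step, if $\oracletuning$ is smaller, or by the very definition of $\avtuning$), so its admissibility condition applied to the pair $(\tuningone,\tuningtwo)=(\oracletuning,\avtuning)$ yields the bound $\bigl(\corr{\testing}{\oracletuning}\cdot\oracletuning + \corr{\testing}{\avtuning}\cdot\avtuning\bigr)\normtwo{\testing} \leq 2\,\corr{\testing}{\oracletuning}\cdot\oracletuning\cdot\normtwo{\testing}$, where the last inequality uses the conclusion of the first step. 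Summing the two contributions yields the claimed factor-$3$ bound.

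The main obstacle is the admissibility verification for $\oracletuning$: Lemma~\ref{lem:OracleSPR} is stated for a \emph{single} $\tuning$ satisfying its lower-bound hypothesis, and invoking it at every $\tuningone,\tuningtwo\geq\oracletuning$ requires that hypothesis to hold uniformly across the relevant tail of $\tuningset$, even though the bound $2\abs{(\design\testing)^{\top}\noise}/(\corr{\testing}{\tuning}\normtwo{\testing})$ itself depends on $\tuning$ through $\corr{\testing}{\tuning}$. This uniformity is the substantive content of ``under the conditions of Lemma~\ref{lem:OracleSPR}''; once granted, the rest is bookkeeping with the triangle inequality and the definitions of $\avtuningset$ and $\oracletuning$.
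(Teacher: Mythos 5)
Your proposal is correct and follows essentially the same route as the paper's proof: establish $\corr{\testing}{\avtuning}\cdot\avtuning \leq \corr{\testing}{\oracletuning}\cdot\oracletuning$ via the pairwise test and Lemma~\ref{lem:OracleSPR} (you argue directly that $\oracletuning\in\avtuningset$, the paper phrases the same argument as a contradiction), then split with the triangle inequality through $\regularizedestimator{\nameedr}{\oracletuning}$ to collect the factor $3$. The uniformity caveat you flag---that the lower-bound hypothesis must hold for all $\tuningone,\tuningtwo\geq\oracletuning$ even though it depends on $\tuning$ through $\corr{\testing}{\tuning}$---is indeed implicitly assumed (and not spelled out) in the paper's own proof, so your explicit treatment of it is, if anything, more careful.
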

\noindent This result guarantees that our calibration pipeline selects an essentially optimal tuning parameter from any grid~\tuningset.
Our pipeline is the only method for tuning parameter selection in personalized medicine that is equipped with such finite-sample guarantees. 
It does, moreover, not require  any knowledge about the regression vector~\true\ nor the noise~$\noise$.

Our calibration method is fully adaptive to the noise distribution; 
however, it is instructive to exemplify our main result by considering Gaussian noise (see  Appendix~\ref{example:gaussian} for the detailed derivations):
\begin{example}[Gaussian noise]
\label{ex:OracleNormalSPR}
Suppose orthonormal design and Gaussian random noise $\noise\sim \mathcal{N}_{n}[0_{n},\sigma^{2}\identity_{n \times n}/n]$. For any $\errorprobability \in (0,1)$, 
it holds with probability at least $1-\errorprobability$ that
    \begin{equation*}
        \bigl|\testing^{\top}(\true-\regularizedestimator{\nameedr}{\avtuning})\bigr| \leq 3 \sigma  \sqrt{\frac{8\log({2}/{\errorprobability})}{n}}\normtwo{\testing}.
    \end{equation*}
The bound provides the usual parametric rate~$\sigma/\sqrt{n} $ in the number of samples~$n$;
the factor~$\normtwo{\testing}$ entails the dependence on the number of parameters~$p$.
\end{example}


\section{Algorithm and Numerical Analysis}
\label{sec:Algorithm}

One of the main features of our pipeline is its efficient implementation.
This implementation exploits a fundamental property of our estimator:
there is a one-to-one correspondence between the \nameedr\ and the ridge estimator via the tuning parameters.


\subsection{Connections to the ridge estimator}

The ridge estimator is the $\ell_2^2$-regularized least-squares estimator~\cite{hoerl1970ridge}
\begin{equation}
\label{eq:RidgeRegression}
  \regularizedestimator{\nameridge}{\ridgetuning} \in \argmin_{\target \in \real{p}} \biggl\{\normtwo{\outcome-\design\target}^{2} + \ridgetuning  \normtwo{\target}^{2}\biggr\},
\end{equation}
where $\ridgetuning>0$ is a tuning parameter. Its computational efficiency, which is due to its closed-form expression, provides a basis for the computation of our \nameedr\ estimator.
The closed-form of the ridge estimator can be derived from the Karush-Kuhn-Tucker (KKT) conditions as
\begin{equation}
\label{eq:RidgeEstimator}
    \regularizedestimator{\nameridge}{\ridgetuning}=(\design^{\top}\design+\ridgetuning\identity_{p \times p})^{-1}\design^{\top}\outcome, 
\end{equation}
noting that the matrix $(\design^{\top}\design+\ridgetuning\identity_{p \times p})$ is always invertible if  $\ridgetuning>0$. 

However, the inversion of the matrix $\design^{\top}\design+\ridgetuning\identity_{p \times p}$ still deserves some thought:
first, the matrix might be ill-conditioned,
and second, the matrix needs to be computed for a range of tuning parameters rather than only for a single one.
A standard approach to these two challenges is a  singular value decomposition (svd) of the design matrix~\design.

\begin{lemma}[Computation of the ridge estimator through singular value decomposition]
\label{lem:SVD}
Let a singular value decomposition of \design\ be given by $\design = \leftsingular \diagnal\rightsingular^{\top}$, where 
$\leftsingular\in\real{n \times n}$ and $\rightsingular\in\real{p \times p}$ are orthonormal matrices, and 
$\diagnal = \operatorname{diag}(\singularvalue_{1}, \singularvalue_{2}, . . . , \singularvalue_{p})$ is an $n \times p$ diagonal matrix of the corresponding {\em singular values} $d_{1}, d_{2}, . . .  , d_{p}$. 
Then, the ridge estimator can be computed as
\begin{equation}
\label{eq:RidgeBySVD}
    \regularizedestimator{\nameridge}{\ridgetuning}
     = \rightsingular\diagnalinv\leftsingular^{\top}\outcome, 
\end{equation}
where $\diagnalinv \in \real{p \times n}$ is diagonal with $\diagnalinv = \operatorname{diag}(\singularvalue_{1}/(\singularvalue_{1}^{2}+\ridgetuning), ... , \singularvalue_{p}/(\singularvalue_{p}^{2}+\ridgetuning))$.
\end{lemma}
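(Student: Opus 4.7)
The plan is to start from the closed-form expression in~\eqref{eq:RidgeEstimator}, namely $\regularizedestimator{\nameridge}{\ridgetuning}=(\design^{\top}\design+\ridgetuning\identity_{p \times p})^{-1}\design^{\top}\outcome$, and substitute the singular value decomposition $\design=\leftsingular\diagnal\rightsingular^{\top}$ so that the $p\times p$ matrix inversion reduces to inverting a diagonal matrix, for which the entries can be written down explicitly.

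The first step exploits the orthonormality of $\leftsingular$ and $\rightsingular$. From $\leftsingular^{\top}\leftsingular=\identity_{n\times n}$ we obtain $\design^{\top}\design=\rightsingular\diagnal^{\top}\diagnal\rightsingular^{\top}$, and from $\rightsingular\rightsingular^{\top}=\identity_{p\times p}$ we rewrite $\ridgetuning\identity_{p \times p}=\ridgetuning\rightsingular\rightsingular^{\top}$. Combining these yields
\[
\design^{\top}\design+\ridgetuning\identity_{p\times p}=\rightsingular\bigl(\diagnal^{\top}\diagnal+\ridgetuning\identity_{p\times p}\bigr)\rightsingular^{\top}.
\]
Since $\diagnal^{\top}\diagnal$ is diagonal with non-negative entries and $\ridgetuning>0$, the bracketed matrix is strictly positive definite, and inverting the product (using $\rightsingular^{\top}=\rightsingular^{-1}$) gives
\[
\bigl(\design^{\top}\design+\ridgetuning\identity_{p\times p}\bigr)^{-1}=\rightsingular\bigl(\diagnal^{\top}\diagnal+\ridgetuning\identity_{p\times p}\bigr)^{-1}\rightsingular^{\top}.
\]

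The second step is to pre-multiply by $\design^{\top}=\rightsingular\diagnal^{\top}\leftsingular^{\top}$ and cancel $\rightsingular^{\top}\rightsingular=\identity_{p\times p}$ to obtain
\[
\bigl(\design^{\top}\design+\ridgetuning\identity_{p\times p}\bigr)^{-1}\design^{\top}=\rightsingular\bigl(\diagnal^{\top}\diagnal+\ridgetuning\identity_{p\times p}\bigr)^{-1}\diagnal^{\top}\leftsingular^{\top}.
\]
The central block is diagonal: an entry-wise computation gives $i$th diagonal entry $\singularvalue_{i}/(\singularvalue_{i}^{2}+\ridgetuning)$, which matches the definition of $\diagnalinv$. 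Multiplying on the right by $\outcome$ then produces the claimed identity~\eqref{eq:RidgeBySVD}.

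The only step that requires care — more a bookkeeping point than a genuine obstacle — is the shape matching: $\diagnal$ is $n\times p$, so $\diagnal^{\top}\diagnal$ is $p\times p$ whereas $\diagnalinv$ must come out as $p\times n$. One verifies that $(\diagnal^{\top}\diagnal+\ridgetuning\identity_{p\times p})^{-1}\diagnal^{\top}$ is indeed $p\times n$, and when $n<p$ the padding zeros of $\diagnal$ translate, after adding $\ridgetuning\identity_{p \times p}$ and inverting, into zero entries of $\diagnalinv$ at the corresponding positions (consistent with setting $\singularvalue_{i}=0$ for $i>\min(n,p)$). Once this is checked, the rest is standard linear algebra.
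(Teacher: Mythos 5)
Your proof is correct and follows essentially the same route as the paper: substitute the SVD into the closed form $(\design^{\top}\design+\ridgetuning\identity_{p\times p})^{-1}\design^{\top}\outcome$, use orthonormality of $\leftsingular$ and $\rightsingular$ to reduce the inversion to the diagonal matrix $\diagnal^{\top}\diagnal+\ridgetuning\identity_{p\times p}$, and read off the entries of $\diagnalinv$. Your extra attention to the rectangular shapes (writing $\diagnal^{\top}\diagnal$ rather than the paper's looser $\diagnal^{2}$, and handling the $n<p$ padding) is a minor tightening, not a different argument.
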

\noindent The singular value decomposition of the design matrix does not depend on the tuning parameter; 
therefore, the ridge estimators \regularizedestimator{\nameridge}{\ridgetuning}\ can be readily computed for multiple tuning parameters just by substituting the value of \ridgetuning\ in~\diagnalinv. 
The resulting set of ridge ($\nameedr$) estimators for a set of tuning parameters \ridgetuningset\ is called the ridge ($\nameedr$) path for \ridgetuningset.

Now, the crucial result is that the ridge estimator and the \nameedr\ are computational siblings.
\begin{thm}[One-to-one mapping between tuning parameters]
\label{thm:TuningRelation}
The one-to-one mapping $\tuningmapping{\ridgetuning}\, : \, \ridgetuning \mapsto \tuning$ defined by 
\begin{equation}
\label{eq:TuningMapping}
    \tuning = \tuningmapping{\ridgetuning} : = \normtwo{2\design^{\top}(\outcome-\design\regularizedestimator{\nameridge}{\ridgetuning})}
\end{equation}
transforms tuning parameters~\ridgetuning\ of the ridge estimator to tuning parameters~\tuning\ of the \nameedr\ estimator such that $\regularizedestimator{\nameridge}{\ridgetuning}=\regularizedestimator{\nameedr}{\tuning}$.
\end{thm}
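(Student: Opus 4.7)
The plan is to verify the identity $\regularizedestimator{\nameridge}{\ridgetuning}=\regularizedestimator{\nameedr}{\tuningmapping{\ridgetuning}}$ by checking that the ridge solution already satisfies the KKT (subdifferential) optimality conditions of the \nameedr\ program~\eqref{eq:SPR} at the level $\tuning=\tuningmapping{\ridgetuning}$. Because both objectives are convex, satisfying the subgradient condition is sufficient to conclude $\regularizedestimator{\nameridge}{\ridgetuning}\in\argmin$ of the \nameedr\ objective, with no uniqueness assumption required on the minimizer.

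First, I would differentiate~\eqref{eq:RidgeRegression} to obtain the ridge normal equation
\begin{equation*}
\design^{\top}\bigl(\outcome-\design\regularizedestimator{\nameridge}{\ridgetuning}\bigr)=\ridgetuning\,\regularizedestimator{\nameridge}{\ridgetuning}.
\end{equation*}
Multiplying by $2$ and taking Euclidean norms yields the working identity $\tuningmapping{\ridgetuning}=2\ridgetuning\normtwo{\regularizedestimator{\nameridge}{\ridgetuning}}$, which will make the comparison to the \nameedr\ condition transparent.

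Next, I would derive the \nameedr\ optimality condition from~\eqref{eq:SPR} via subdifferential calculus. At any candidate $\target\neq 0$ the subdifferential of $\normtwo{\target}$ is the singleton $\{\target/\normtwo{\target}\}$, so \nameedr\ optimality reads $2\design^{\top}(\outcome-\design\target)=\tuning\,\target/\normtwo{\target}$; at $\target=0$ the subdifferential is the closed unit $\ell_2$ ball, and the condition reduces to $\normtwo{2\design^{\top}\outcome}\leq\tuning$. Plugging in $\target=\regularizedestimator{\nameridge}{\ridgetuning}\neq 0$ together with $\tuning=2\ridgetuning\normtwo{\regularizedestimator{\nameridge}{\ridgetuning}}$, the ridge normal equation rewrites as
\begin{equation*}
2\design^{\top}\bigl(\outcome-\design\regularizedestimator{\nameridge}{\ridgetuning}\bigr)=2\ridgetuning\,\regularizedestimator{\nameridge}{\ridgetuning}=\tuning\,\frac{\regularizedestimator{\nameridge}{\ridgetuning}}{\normtwo{\regularizedestimator{\nameridge}{\ridgetuning}}},
\end{equation*}
which is exactly the \nameedr\ subgradient condition at $\regularizedestimator{\nameridge}{\ridgetuning}$. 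Convexity of the \nameedr\ objective then gives $\regularizedestimator{\nameridge}{\ridgetuning}\in\argmin$ at this $\tuning$, which is the claim.

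Finally, I would handle the degenerate case $\regularizedestimator{\nameridge}{\ridgetuning}=0$, which by the ridge normal equation forces $\design^{\top}\outcome=0$ and hence $\tuning=0$; the \nameedr\ objective then reduces to ordinary least squares, which is minimized at the origin. For the ``one-to-one'' assertion, I would rewrite $\tuningmapping{\ridgetuning}=2\ridgetuning\normtwo{\regularizedestimator{\nameridge}{\ridgetuning}}$ in the singular-value coordinates from Lemma~\ref{lem:SVD} and verify that the resulting expression is a continuous, strictly increasing function of $\ridgetuning\in(0,\infty)$ whenever $\design^{\top}\outcome\neq 0$. The main obstacle I anticipate is the subdifferential treatment at the origin, where one cannot naively divide by $\normtwo{\target}$; the clean workaround is to invoke convexity for the sufficiency of the KKT condition rather than try to characterize the entire \nameedr\ argmin set, which would otherwise be delicate in the $p>n$ regime.
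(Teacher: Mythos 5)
Your proof is correct and follows essentially the same route as the paper: both rest on matching the ridge normal equation $2\design^{\top}(\outcome-\design\regularizedestimator{\nameridge}{\ridgetuning})=2\ridgetuning\,\regularizedestimator{\nameridge}{\ridgetuning}$ with the \nameedr\ subgradient condition at $\tuning=\tuningmapping{\ridgetuning}=2\ridgetuning\normtwo{\regularizedestimator{\nameridge}{\ridgetuning}}$. You are in fact somewhat more careful than the paper's own argument, which simply substitutes the ridge solution into the \nameedr\ KKT condition and takes norms; your explicit appeal to convexity for sufficiency of the subgradient condition, your treatment of the degenerate case $\regularizedestimator{\nameridge}{\ridgetuning}=0$, and your SVD-based monotonicity check for the one-to-one claim fill in details that the paper leaves implicit.
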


\noindent This mapping transforms, in particular, the optimal tuning parameter of the ridge\ estimator to a corresponding optimal tuning parameter of the \nameedr\ estimator. 
More generally, it allows us to compute the $\nameedr$ estimator via the ridge estimator---see below.


\subsection{Algorithm}
\label{subsec:Algorithm}

The core idea of our proposed algorithm is to exploit the above one-to-one mapping between  \nameedr\ estimator and ridge\ estimator.
This correspondence allows us to compute $\nameedr$ solution paths efficiently via the ridge's explicit formulation and svd. 

First, consider a set of ridge\ tuning parameters \ridgetuningset\ and its corresponding set of \nameedr\ tuning parameters given by
\begin{equation*}
\label{eq:TransformedTuningSet}
\ridgetuningsetmapped:=\biggl\{\tuning\in\real{}:\>r=\tuningmapping{\ridgetuning},\,\ridgetuning\in\ridgetuningset\biggr\}
\end{equation*}
with cardinality $m:=|\ridgetuningsetmapped|$.
This set contains, in particular, the tuning parameter~$\avtuning$, 
whose optimality is guaranteed under Theorem~\ref{thm:optimality}.
To compute the tuning parameter~$\avtuning$, given data~\testing, we first order the elements $r_1,r_2,\dots,r_m$ of~\ridgetuningsetmapped\ such that
\begin{equation}
\label{eq:TuningSorting}
    \corr{\testing}{\tuning_{1}} \cdot\tuning_{1} \le \corr{\testing}{\tuning_2}\cdot\tuning_2 \le \dots \le
     \corr{\testing}{\tuning_{m}} \cdot\tuning_{m}.
\end{equation}
The $\pavedr$ method can then be formulated in terms of the binary random variables
\begin{equation*}
    \bitesting_{\tuning_{i}} :=  \prod_{j=i}^{m}\mathbbm{1} \Bigg\{ 
     \abs{\testing^{\top}(\regularizedestimator{\nameedr}{\tuning_{i}} - \regularizedestimator{\nameedr}{\tuning_{j}})} 
    - \Bigl(\corr{\testing}{\tuning_{i}} \cdot\tuning_{i} + \corr{\testing}{\tuning_{j}} \cdot\tuning_{j}\Bigr)\normtwo{\testing} \: \leq \, 0 \Bigg\}
\end{equation*}
for $i \in \{1,\dots,m\}$, 
and an algorithm is as follows:

\begin{algorithm}[H]
\label{alg:PAVEDR}
\SetAlgoLined

\textbf{Input}: $\bigl(\tuning_{i}\bigr)_{i=1,\dots,m}, 
\bigl(\regularizedestimator{\nameedr}{\tuning_{i}}\bigr)_{i=1,\dots,m}, 
\testing $

\textbf{Result}: $\avtuning$

\vspace{1ex}
Set initial index: $i\gets m$

 \While{$\bitesting_{\tuning_{i}} \neq 0$ and $i > 1$}{
  Update index: $i\gets i-1$
 }

Set output: $\avtuning \gets \tuning_{i}$
\vspace{1em}

\caption{Algorithm for $\pavedr$ of Definition~\ref{def:AV}.}
\end{algorithm}%
\noindent The full pipeline can be summarized by the following four steps:
\begin{enumerate}[leftmargin=1.8cm]

    \item[\textit{Step 1:}] Generate a set \ridgetuningset\ of tuning parameters for ridge regression.
    \item[\textit{Step 2:}] Compute the ridge solution path with respect to \ridgetuningset\ by using~\eqref{eq:RidgeBySVD}.
    \item[\textit{Step 3:}] Transform the ridge tuning parameters to their \nameedr\ counterparts \ridgetuningsetmapped\ using~\eqref{eq:TuningMapping} and sort the tuning parameters according to~\eqref{eq:TuningSorting}.
    \item[\textit{Step 4:}] Use the \pavedr\ method (Algorithm~\ref{alg:PAVEDR}) to compute the tuning parameter~\avtuning\ and map it back to its ridge counterpart \avtuningridge.
    
\end{enumerate}

The algorithm can be readily implemented and is fast:
it essentially only requires the computation of one ridge solution path (a single svd). 
In strong contrast, $K$-fold CV requires the computation of $K$ ridge solution paths. 
Consequently, the ridge estimator with \pavedr\ can be computed approximately $K$~times faster than with $K$-fold CV, which we will confirm in the simulations.
Moreover, CV still requires a tuning parameter, namely, the number of folds~$K$, while \pavedr\ is completely parameter-free.


\subsection{Simulation Study}
\label{subsec:SimulationStudy}

We evaluate the prediction performance of the \pavedr\ method using (1) fully simulated data with random design and (2) a real data set with a simulated outcome.
The results are compared to $K$-fold CV, which is a standard reference method. 

The first setting is solely based on simulated data. The dimensions of the design matrix are $(n,p) \in \{ (50,100), (150,250), (200,500) \}$. 
First, the entries of the design matrix~$\design$ are sampled i.i.d.\@ from~$\normalDist{\mu}{1}$, where the mean itself is sampled according to $\mu\sim\normalDist{0}{10}$,
and the columns of the design matrix are then normalized to have Euclidean norm equal to one.
The entries of the regression vector~\true\ are sample i.i.d.\@ from $\normalDist{0}{1}$ and then projected onto the row space of \design\ to ensure identifiability~\cite{Shao2012,Buhlmann2013}.
The entries of the noise vector~$\noise$ are sampled i.i.d.\@ from $\normalDist{0}{\sigma^2}$,  where  $\sigma^2 = 2 \variance{\design\true}$ to ensure a signal-to-noise ratio of 0.5.
Then, $100$~data testing vectors~$\testing$ are sampled i.i.d.\@ from $\mathcal{U}_p[-1,1]$.
We generate a set of 300 tuning parameters $\ridgetuningset = \{10^q \>|\> q=-5 + 10i/299, \ i=0,\dots,299 \}$.


\begin{table}
\centering
\caption{For the first simulation setting, which entirely consists of artificial data, \pavedr\ outperforms $5$-fold and $10$-fold CV in accuracy and speed.}
\label{table:simulationstudyRandom}
\ra{1.2}
\begin{tabular}{ccrr}
    \hline
    (n,p) & Method & Mean error (sd) & Scaled run time\\ 
    \hline
    \multirow{3}{*}{(50,100)}& $\pavedr$ & 166.78 \hphantom{0}(242.46) & 1.00\\
                             & 5-fold CV & 340.18 \hphantom{0}(888.28) & 1.57\\
                             & 10-fold CV & 474.58 (1220.44) & 3.64\\
    \hline
    \multirow{3}{*}{(150,250)}& $\pavedr$ & 433.50 \hphantom{0}(669.50) & 1.00\\
                             & 5-fold CV & 724.90 (1712.65) & 3.43\\
                             & 10-fold CV & 872.50 (2560.01) & 8.04\\
    \hline
    \multirow{3}{*}{(200,500)}& $\pavedr$ & 805.94 (1316.43) & 1.00\\
                             & 5-fold CV & 1098.68 (2821.35) & 3.65\\
                             & 10-fold CV & 1144.12 (2733.78) & 8.44\\
    \hline
\end{tabular}
\end{table}

The results are summarized in Table~\ref{table:simulationstudyRandom}.
The mean personalized prediction errors for the testing vectors are averaged over 100 simulations as described above.
The run time is shown relative to \pavedr.
We observe that in all considered cases, \pavedr\ improves on CV both in terms of accuracy as well as in speed. A more detailed analysis of the scaled run time for CV relative to $\pavedr$ is shown in Figure~\ref{fig:runtime}. We fix $n$ with increasing $p$ and vice versa. Observing that the gain in speed is less than the factor~$K$, because the  computations of the ridge estimator are then fast enough to compete with the sorting of the bounds  in \pavedr.


\begin{figure}
    \centering
    \includegraphics[width=0.49\textwidth]{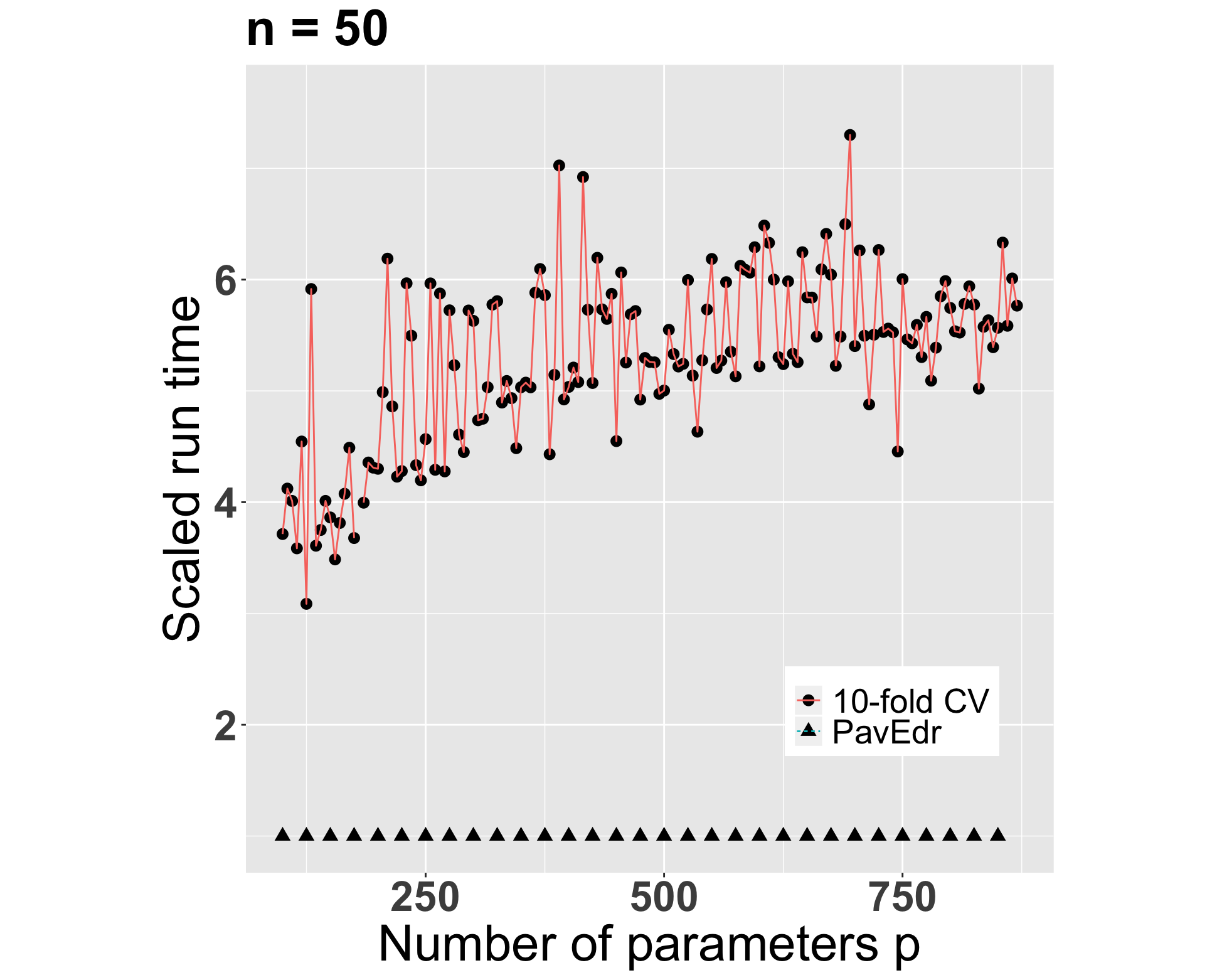} 
    \hfill
    \includegraphics[width=0.49\textwidth]{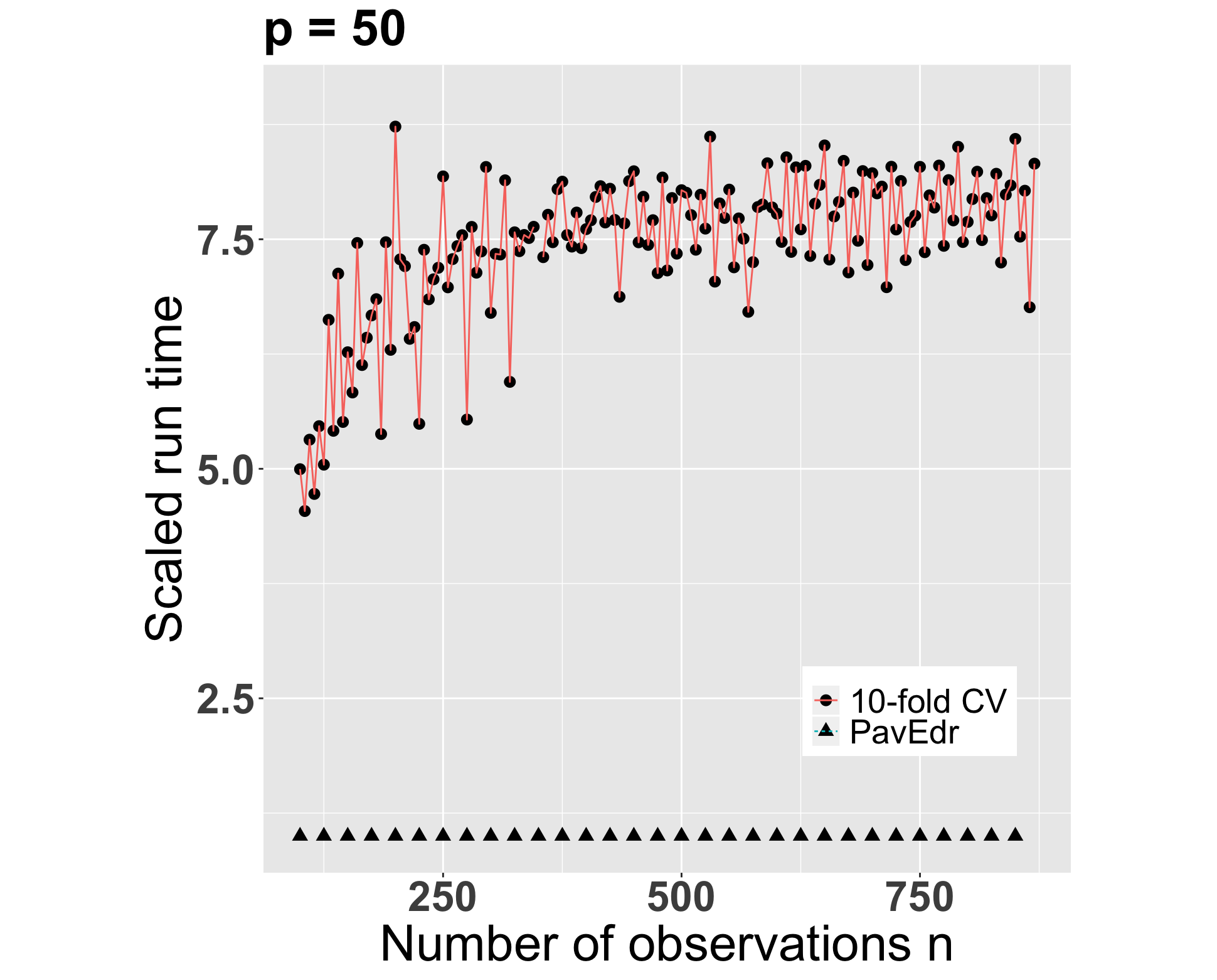}
\caption[]{Run time of 10-fold CV scaled by the run time of \pavedr\ for a fixed number of observations $n$ and fixed number of parameters $p$ with $p$ and $n$ increasing, respectively. We observe that $\pavedr$ is faster than $10$-fold CV.}
\label{fig:runtime}
\end{figure}

In the second setting, we simulate the outcome but based on real data as covariates. The basis is the genomic data from the application in Section~\ref{sec:Application} where the sample size or number of patient is~$n=26$. 
The number of covariates in the design matrix is restricted to the $p=1936$ gene probe targets 
identified as potentially influential by \cite{cashion2013expression}. 
The regression vector and the noise are then generated as in the first simulation setting above.  The results are summarized in Table~\ref{table:simulationstudyKidney}.
We observe again that \pavedr\ improves on CV both in terms of accuracy as well as in speed. The results of this section demonstrate that \pavedr\ is a contender on data,
which confirms and complements our theoretical findings from before.

\begin{table}
\centering
\caption{In the second simulation setting, which consists of real covariate data and simulated outcomes, \pavedr\ outperforms  $5$-fold and $10$-fold CV again both in accuracy and speed.}
\label{table:simulationstudyKidney}
\ra{1.2}
\begin{tabular}{crr}
    \hline
    Method & Mean error (sd) & Scaled run time\\ 
    \hline
    $\pavedr$  & \hphantom{0}33.71 \hphantom{0}(34.73) & 1.00\\
    5-fold CV  & 164.06 (120.06) & 4.02\\
    10-fold CV & 132.90 \hphantom{0}(97.31) & 9.58\\
        
    \hline
\end{tabular}
\end{table}


\section{Application to kidney transplant patient data}
\label{sec:Application}

Kidney transplant recipients are known to gain significant weight during the first year after transplantation, with a reported average increase of 12 kg~\cite{patel1998effect}. 
Such substantial weight gain over a relatively short time period gives an increased risk for several adverse health effects, such as cardiovascular disease, 
and may be detrimental for the overall outcome of the patient. The weight gain has been explained by the use of prescribed steroids which increase the appetite, 
but steroid-free protocols alone have not reduced the risk of obesity, suggesting alternative causes. 
Even though weight gain is fundamentally caused by a too high calorie intake relative to the energy expenditure, the heterogeneity in the individual response is substantial.
Genetic variation has, therefore, been considered as a contributing factor, and several genes have been linked to obesity and weight gain~\cite{bauer2009obesity, cheung2010obesity}. 

\cite{cashion2013expression} investigated whether genomic data can be used to predict weight gain in kidney transplant recipients. This was done by measuring gene expression in subcutaneous adipose tissue which has an important role in appetite regulation and can easily be obtained from the patients during surgery. The patients' weight was recorded at the time of transplantation and at a 6-months follow-up visit, resulting in a relative weight difference. The adipose tissue samples were collected from 26 transplant patients at the time of surgery, and mRNA levels were measured to obtain the gene expression profiles for $28\,869$ gene probe targets using Affymetrix Human Gene 1.0 ST arrays\footnote{All data are publicly available in the EMBL-EBI ArrayExpress database (\url{www.ebi.ac.uk/arrayexpress}) under accession number E-GEOD-33070.}. 
The expression variability was further not associated with gender or race~\cite{cashion2013expression}. 
As excessive weight gain may have severe consequences for the patients, the goal is to predict the future weight increase based on the available gene expression profiles. If a large weight increase is predicted, additional measures such as diet restrictions or physiotherapy could be set into effect. 

\begin{table} 
\centering
\caption{
In the kidney transplant data, regardless of in-sample or leave-one-out prediction,  \pavedr\ outperforms  $5$-fold and $10$-fold CV again both in accuracy and speed. 
}
\label{sx9x9s}  
\ra{1.2}
\begin{subtable}{\textwidth}
\centering
\ra{1.2}
\caption{In-sample prediction}
\label{table:RealKidneyInSample}
\begin{tabular}{crr}
    \hline
    Method & Mean error (sd) & Scaled run time\\ 
    \hline
    $\pavedr$ & 0.0049 (0.0121) & 1.00\\
    5-fold CV & 0.0646 (0.0540) & 1.19\\
    10-fold CV & 0.0666 (0.0584) & 3.04\\
    \hline
\end{tabular}
\end{subtable}%
\hfill
\begin{subtable}{\textwidth}
\centering
\caption{Leave-one-out prediction}
\label{table:RealKidneyLoocv}
\begin{tabular}{crr}
    \hline
    Method & Mean error (sd) & Scaled run time\\ 
    \hline
    $\pavedr$ & 0.0622 (0.0309) & 1.00\\
    5-fold CV & 0.0672 (0.0415) & 1.10\\
    10-fold CV & 0.0678 (0.0475) & 2.82\\
    \hline
\end{tabular}
\end{subtable}
\end{table}

We compare the performance of our method in predicting weight gain for the kidney transplant patients to the prediction of standard ridge regression calibrated by CV. 
In detail, we make predictions for each patient both in-sample and out-of-sample, leaving out the observation and using the remaining data to fit the penalized regression model and select the optimal tuning parameter. 
Since we do not know the true parameter $\boldsymbol{\beta^{*}}$, we can only examine the performance of our method and CV by comparing their estimation errors, which is defined by 
\begin{equation}
\label{eq:KidneyError}
|y_{i} - \mathbf{x}_i^{\top}\regularizedestimator{\nameedr}{\tuning}|.
\end{equation}
\noindent As described in the previous section, the  columns  of  the  design  matrix are normalized to have Euclidean norm one. Unlike in the Section \ref{subsec:SimulationStudy}, we here take all the $28\,869$ gene probes into consideration. 

%

The averaged results are summarized in Table~\ref{table:RealKidneyInSample} and Table~\ref{table:RealKidneyLoocv}. 
We observe that $\pavedr$ clearly outperforms  5-fold and 10-fold CV for both in-sample and out-of-sample prediction of the kidney transplant data. 
For out-of-sample prediction, we observe an improvement of about $7.5\%$ in the estimation error and an improvement of $25.5\%$ in the standard deviation compared to 5-fold CV.
These improvements, especially in standard deviation, reinforce the advantages of a personalized approach to tuning parameter calibration.


\section{Conclusion}
\label{sec:Conclusion}

We have introduced a pipeline that calibrates ridge regression for personalized prediction. 
Its distinctive features are the finite sample guarantees (see Theorem~\ref{thm:optimality}) and the statistical and  computational efficiency (see Tables~\ref{table:simulationstudyRandom} and~\ref{table:simulationstudyKidney}).
These features are echoed when predicting the weight gain of kidney transplant patients (see Table~\ref{sx9x9s}).
Hence, our pipeline can improve personalized prediction and, thereby, further the cause of personalized medicine. 

Despite our focus on personalized medicine, we also envision applications in other areas where individual heterogeneity is crucial for predictions. 
Two examples are item recommendation, predicting the rating of an item or product assigned by a specific user~\cite{guy2010social,rafailidis2014content}, and personalized marketing, delivering individualized product prices or messages to specific costumers~\cite{tang2013prediction}.

\appendix


\section{Proofs}

\subsection{Proof of Lemma~\ref{lem:OracleSPR}}
\label{subsec:OracleSPRProof}
\begin{proof}

Assume $\tuning \geq 2\abs{(\design\testing)^\top\noise}/(\corr{\testing}{\tuning}\normtwo{\testing})$ and orthonormal design $\design^\top\design=\identity_{p\times p}$. According to the KKT conditions of the $\nameedr$ estimator, we have 
\begin{equation*}
\begin{split}
    \tuning\frac{\regularizedestimator{\nameedr}{\tuning}}{\normtwo{\regularizedestimator{\nameedr}{\tuning}}} & =
    2\design^{\top}(\outcome-\design\regularizedestimator{\nameedr}{\tuning}) \\
    & = 2\design^{\top}(\design\true+\noise-\design\regularizedestimator{\nameedr}{\tuning}) \\
    & = 2\design^{\top}\design(\true-\regularizedestimator{\nameedr}{\tuning})+2\design^{\top}\noise.
\end{split}
\end{equation*}
Hence,
\begin{equation}
\label{eq:Bound1}
    \design^{\top}\design(\true-\regularizedestimator{\nameedr}{\tuning})= -\design^{\top}\noise+\frac{\tuning}{2}\frac{\regularizedestimator{\nameedr}{\tuning}}{\normtwo{\regularizedestimator{\nameedr}{\tuning}}}.
\end{equation}
Let $\testing \in \real{p}$ and multiply $\testing^{\top}$ from the left to obtain
\begin{equation*}
    \testing^{\top}(\true-\regularizedestimator{\nameedr}{\tuning})= -\testing^{\top}\design^{\top}\noise+\frac{\tuning}{2}\frac{\testing^{\top}\regularizedestimator{\nameedr}{\tuning}}{\normtwo{\regularizedestimator{\nameedr}{\tuning}}}
\end{equation*}
where we use the assumption of orthonormal design. By taking absolute value on both sides and applying the triangle inequality, we derive the following bound for the personalized prediction error \eqref{eq:SingleError}:
\begin{equation*}
\begin{split}
    \abs{\testing^{\top}(\true-\regularizedestimator{\nameedr}{\tuning})} 
    & \leq \abs{\testing^{\top}\design^{\top}\noise} + \frac{\tuning}{2}\bigl|\frac{\testing^{\top}\regularizedestimator{\nameedr}{\tuning}}{\normtwo{\regularizedestimator{\nameedr}{\tuning}}}\bigr| \\
    & \leq \frac{\tuning}{2}\corr{\testing}{\tuning}\normtwo{\testing} +  \frac{\tuning}{2}\bigl|\frac{\testing^{\top}\regularizedestimator{\nameedr}{\tuning}}{\normtwo{\testing}\normtwo{\regularizedestimator{\nameedr}{\tuning}}}\bigr|\normtwo{\testing} \\
    & = \corr{\testing}{\tuning}\cdot\tuning\cdot\normtwo{\testing},
\end{split}
\end{equation*}
since $\tuning \geq 2\abs{ (\design\testing)^{\top}\noise}/(\corr{\testing}{\tuning}\normtwo{\testing})$ by assumption. Finally, we obtain the bound
\begin{equation}
    \abs{\testing^{\top}(\true-\regularizedestimator{\nameedr}{\tuning} )} \leq \corr{\testing}{\tuning} \cdot \normtwo{\testing} \:\cdot\: \tuning,
\end{equation}
with
\begin{equation*}
    \corr{\testing}{\tuning} :=  \frac{\bigl|\testing^{\top}\regularizedestimator{\nameedr}{\tuning}\bigr|}{\normtwo{\testing}\normtwo{\regularizedestimator{\nameedr}{\tuning}}}.
\end{equation*}

\end{proof}


\subsection{Proof of Theorem~\ref{thm:optimality}}
\label{subsec:optimalityProof}

\begin{proof}

Let $\testing\in\real{p}$ and suppose that the linear regression model \eqref{eq:LinearModel} is under orthonormal design.

\textbf{Bound on $\corr{\testing}{\avtuning} \cdot \avtuning$:}
First, we show that $\corr{\testing}{\avtuning}  \cdot \avtuning  \leq \corr{\testing}{\oracletuning} \cdot \oracletuning$. 
Let
\begin{equation*}
    \corr{\testing}{\avtuning} \cdot \avtuning  \geq \corr{\testing}{\oracletuning}  \cdot \oracletuning,
\end{equation*}
 then by definition of $\avtuning$, there must exist two tuning parameters~$\tuningone, \tuningtwo$ with 
 \begin{align*}
     \tuningone & \geq 2\abs{ (\design\testing)^{\top}\noise}/(\corr{\testing}{\tuningone}\normtwo{\testing}),\\
     \tuningtwo & \geq 2\abs{ (\design\testing)^{\top}\noise}/(\corr{\testing}{\tuningtwo}\normtwo{\testing}),
 \end{align*} such that 
\begin{equation*}
        \abs{\testing^{\top}(\regularizedestimator{\nameedr}{\tuningone} - \regularizedestimator{\nameedr}{\tuningtwo})} 
        \geq 
        (\corr{\testing}{\tuningone} \cdot\tuningone + \corr{\testing}{\tuningtwo} \cdot\tuningtwo)\cdot \normtwo{\testing}.
\end{equation*}

However, by Lemma \eqref{lem:OracleSPR}, we have 
\begin{equation*}
    \abs{\testing^{\top}(\true-\regularizedestimator{\nameedr}{\tuningone} )} \leq \corr{\testing}{\tuningone}  \cdot \tuningone \cdot \normtwo{\testing}
\end{equation*}
and 
\begin{equation*}
    \abs{\testing^{\top}(\true-\regularizedestimator{\nameedr}{\tuningtwo} )} \leq \corr{\testing}{\tuningtwo}  \cdot \tuningtwo \cdot \normtwo{\testing}.
\end{equation*}
Applying the triangle inequality to the above displays and combining the results yields
\begin{equation*}
        \abs{\testing^{\top}(\regularizedestimator{\nameedr}{\tuningone} - \regularizedestimator{\nameedr}{\tuningtwo})} 
        \leq
        (\corr{\testing}{\tuningone} \cdot\tuningone + \corr{\testing}{\tuningtwo} \cdot\tuningtwo)\cdot \normtwo{\testing},
\end{equation*}
which leads to a contradiction to our assumption. Therefore, we obtain the following bound with respect to \oracletuning:
\begin{equation*}
       \corr{\testing}{\avtuning}  \cdot \avtuning  \leq \corr{\testing}{\oracletuning} \cdot \oracletuning.
\end{equation*}


\textbf{Bound on the personalized prediction error:}
Since $\corr{\testing}{\avtuning}  \cdot \avtuning  \leq \corr{\testing}{\oracletuning} \cdot \oracletuning$, we have
\begin{equation*}
\begin{split}
        \abs{\testing^{\top}(\regularizedestimator{\nameedr}{\avtuning} - \regularizedestimator{\nameedr}{\oracletuning})} 
        & \leq
        (\corr{\testing}{\avtuning} \cdot\avtuning + \corr{\testing}{\oracletuning} \cdot\oracletuning)\cdot \normtwo{\testing} \\
        & \leq
         2 \cdot \corr{\testing}{\oracletuning} \cdot\oracletuning\cdot \normtwo{\testing}
\end{split}
\end{equation*}
Applying the triangle inequality, we ultimately find the bound 
\begin{equation*}
\begin{split}
    \abs{\testing^{\top}(\true-\regularizedestimator{\nameedr}{\avtuning})} 
    & = \abs{\testing^{\top}(\true-\regularizedestimator{\nameedr}{\oracletuning} + \regularizedestimator{\nameedr}{\oracletuning} - \regularizedestimator{\nameedr}{\avtuning})} \\
    & \leq \abs{\testing^{\top}(\true - \regularizedestimator{\nameedr}{\oracletuning})\bigr| + \bigl|\testing^{\top}(\regularizedestimator{\nameedr}{\oracletuning} - \regularizedestimator{\nameedr}{\avtuning})}\\
    & \leq
         3 \cdot \corr{\testing}{\oracletuning} \cdot\oracletuning\cdot \normtwo{\testing}.
\end{split}
\end{equation*}

\end{proof}


\subsection{Proof of Example~\ref{ex:OracleNormalSPR}}
\label{example:gaussian}
\begin{lemma}[Deviation inequality]
\label{lem:deviationInequality}
    For any standard normal variable $V\sim \mathcal{N}_{1}[0,1]$, we have the following concentration bound
    \begin{equation*}
        \mathbb{P}\bigl\{ \abs{V} \geq x \bigr\} \leq 2\exp\bigl\{ \frac{-x^2}{2} \bigr\} \qquad\qquad (x > 0).
    \end{equation*}
\end{lemma}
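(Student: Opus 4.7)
The plan is to exploit the symmetry of the standard normal distribution together with a Chernoff-type bound, which is the textbook argument for this tail inequality. First, since $V$ and $-V$ have the same distribution, a union bound gives $\probability{\abs{V}\geq x}\leq 2\probability{V\geq x}$, so it suffices to establish the one-sided bound $\probability{V\geq x}\leq \exp(-x^2/2)$ for $x>0$.

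Next, I would apply Markov's inequality to the nonnegative random variable $e^{\lambda V}$ for an arbitrary $\lambda>0$. Recalling that the moment generating function of a standard normal variable is $\expectation{e^{\lambda V}} = e^{\lambda^{2}/2}$ (which follows from completing the square inside the Gaussian density integral), this yields
\begin{equation*}
    \probability{V\geq x} \,=\, \probability{e^{\lambda V}\geq e^{\lambda x}} \,\leq\, e^{-\lambda x}\,\expectation{e^{\lambda V}} \,=\, e^{-\lambda x + \lambda^{2}/2}.
\end{equation*}

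Finally, I would optimize over $\lambda>0$. The exponent $-\lambda x + \lambda^{2}/2$ is minimized at $\lambda = x$, which is positive by the hypothesis $x>0$, so the optimizer is admissible. Substituting back produces $\probability{V\geq x}\leq e^{-x^{2}/2}$, and multiplying by the factor $2$ from the symmetry step gives the stated bound.

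There is no real obstacle; the argument is entirely routine. The only points worth checking carefully are that $x>0$ ensures the Chernoff optimizer lies in the admissible range $\lambda>0$ (so Markov's inequality applies to $e^{\lambda V}\geq e^{\lambda x}$), and that the symmetry step uses only $\probability{V\geq x}=\probability{V\leq -x}$ with a union bound on the disjoint events $\{V\geq x\}$ and $\{V\leq -x\}$.
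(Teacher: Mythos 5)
Your argument is correct and is essentially identical to the paper's proof: both apply Markov's inequality to $e^{\lambda V}$, use the Gaussian moment generating function $\mathbb{E}[e^{\lambda V}]=e^{\lambda^{2}/2}$, choose $\lambda=x$, and then double the one-sided bound by symmetry. Your version is slightly more explicit about why $\lambda=x$ is admissible and how the symmetry/union-bound step works, but there is no substantive difference.
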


\begin{proof}

\probability{V > x} = \probability{e^{\lambda V} > e^{\lambda x}}
for all $\lambda$. Now by Markov's inequality,

\begin{equation*}
\begin{split}
    \probability{e^{\lambda V} > e^{\lambda x}} 
    & \leq \frac{\mathbb{E}[e^{\lambda V}]}{e^{\lambda x}} \\
    & = e^{\frac{\lambda^{2}}{2}-\lambda x}
\end{split}
\end{equation*}
For $\lambda = x$, we have \probability{V > x} $\leq e^{\frac{-x^{2}}{2}}$. Since the standard normal distribution is symmetric about 0, we obtain the desired result.  
\end{proof}

\noindent Using this concentration bound, we derive the results of Example~\ref{ex:OracleNormalSPR}.

\begin{proof}

\noindent 
Given a $\testing \in \real{p}$, Gaussian noise $\noise\sim \mathcal{N}_{n}[0_{n},\sigma^{2}\identity_{n \times n}/n]$ with variance $\sigma^2$, and suppose that the linear regression model \eqref{eq:LinearModel} is under orthonormal design. We first show that $ \probability{ 2\abs{(\design\testing)^{\top}\noise}/(\corr{\testing}{\tuning}\normtwo{\testing}) \geq \gaussiantuning } \leq \errorprobability$ for 
\begin{equation*}
    \gaussiantuning := \frac{\sigma\normtwo{\design\testing}}{(\corr{\testing}{\tuning}\normtwo{\testing})}\sqrt{\frac{8\log(2/\errorprobability)}{n}}
\end{equation*}
using the concentration bound, Lemma~\ref{lem:deviationInequality}:
\begin{equation*}
\begin{split}
    \probability{2\abs{(\design\testing)^{\top}\noise}/(\corr{\testing}{\tuning}\normtwo{\testing}) \geq \gaussiantuning} 
    & = \probability{\frac{\abs{(\design\testing)^{\top}\noise}}{\frac{\sigma\normtwo{\design\testing}}{(\corr{\testing}{\tuning}\normtwo{\testing})}\sqrt{1/n}}   \geq \frac{(\corr{\testing}{\tuning}\normtwo{\testing})\gaussiantuning}{2\sigma\normtwo{\design\testing}\sqrt{1/n}}  }  \\
    & \leq 2\exp\bigl\{ \frac{-(\frac{\sigma\normtwo{\design\testing}\sqrt{\frac{8\log(2/\errorprobability)}{n}}}{2\sigma\normtwo{\design\testing}\sqrt{1/n}})^2}{2} \bigr\}\\
    & =
        2\exp\bigl\{ \log(\errorprobability/2) \bigr\} \\
    & = \errorprobability.
\end{split}
\end{equation*}
Hence, $\gaussiantuning \geq 2\abs{(\design\testing)^{\top}\noise}/\corr{\testing}{\oracletuning}\normtwo{\testing}$ holds with at least probability $1-\errorprobability$. By Theorem~\ref{thm:optimality}, we have with at least probability $1-\errorprobability$:
\begin{align*}
     \abs{\testing^{\top}(\true-\regularizedestimator{\nameedr}{\avtuning})} 
     & \leq
     3 \ \corr{\testing}{\oracletuning} \,\oracletuning\normtwo{\testing} & (\abs{\corr{\testing}{\oracletuning}} \leq 1) \\
     & =
     3 \ \corr{\testing}{\oracletuning} \frac{\sigma\normtwo{\design\testing}}{\corr{\testing}{\oracletuning}\normtwo{\testing}}\sqrt{\frac{8\log(2/\errorprobability)}{n}}\normtwo{\testing}
     \\
     & =
     3 \sigma\sqrt{\frac{8\log(2/\errorprobability)}{n}}\normtwo{\testing}.
     & (\text{orthonormal design})
\end{align*}
\end{proof}


\subsection{Proof of Lemma~\ref{lem:SVD}}

\label{proof:SVDLemma}

\begin{proof}

Let $\design = \leftsingular \diagnal\rightsingular^{\top}$ be a singular value decomposition of \design\ as given in Lemma~\ref{lem:SVD}. Then by algebraic manipulation of Equation~\eqref{eq:RidgeEstimator} the ridge\ estimator can be written as
\begin{align*}
        \regularizedestimator{\nameridge}{\ridgetuning} 
        & =(\design^{\top}\design + \ridgetuning\identity_{p \times p})^{-1}\design^{\top}\outcome\nonumber\\
        & = (\rightsingular\diagnal^{T}\leftsingular^T\leftsingular\diagnal\rightsingular^{\top} + \ridgetuning\identity_{p \times p})^{-1}\rightsingular\diagnal\leftsingular^{\top}\outcome \nonumber\\
        & = (\rightsingular\diagnal^{2}\rightsingular^{\top} + \ridgetuning\identity_{p \times p})^{-1}\rightsingular\diagnal\leftsingular^{\top}\outcome \nonumber\\
        & = \rightsingular(\diagnal^{2} + \ridgetuning\identity_{p \times p})^{-1}\rightsingular^{\top}\rightsingular\diagnal\leftsingular^{\top}\outcome \nonumber\\ 
        & = \rightsingular\diagnalinv\leftsingular^{\top}\outcome,
\end{align*}
where the matrix $\diagnalinv$ is defined as
\begin{equation*}
    \diagnalinv = \operatorname{diag}(\frac{\singularvalue_{1}}{\singularvalue_{1}^{2}+\ridgetuning}, ... , \frac{\singularvalue_{p}}{\singularvalue_{p}^{2}+\ridgetuning}).
\end{equation*}

\end{proof}


\subsection{Proof of Theorem~\ref{thm:TuningRelation}}
\label{proof:TuningRelation}

\begin{proof}

We consider the KKT-conditions of \eqref{eq:SPR} and replace the \nameedr\ estimator with the ridge\ estimator to obtain
\begin{equation*}
    \tuning\frac{\regularizedestimator{\nameridge}{\ridgetuning}}{\normtwo{\regularizedestimator{\nameridge}{\ridgetuning}}}=2\design^{\top}(\outcome-\design\regularizedestimator{\nameridge}{\ridgetuning}).
\end{equation*}
By taking the $\ell_{2}$-norm of both sides and with $\tuning>0$, we obtain 
\begin{equation*}
    \tuning=\normtwo{ 2\design^{\top}(\outcome-\design\regularizedestimator{\nameridge}{\ridgetuning})}.
\end{equation*}
Thus, we can transform the ridge tuning parameter~\ridgetuning\ to the $\nameedr$ tuning parameter~$\tuning$ with respect to the same estimator. 

Moreover, there is a one-to-one relationship between $\nameedr$ and ridge. The ridge estimator in \eqref{eq:RidgeEstimator} implies that 
\begin{equation*}
    (\design^{\top}\design+\ridgetuning\identity_{p \times p})\regularizedestimator{\nameridge}{\ridgetuning}=\design^{\top}\outcome,
\end{equation*}
and hence 
\begin{equation*}
    \ridgetuning\regularizedestimator{\nameridge}{\ridgetuning} = \design^{\top}(\outcome-\design\regularizedestimator{\nameridge}{\ridgetuning}).
\end{equation*}
Since 
\begin{equation*}
    \tuning=\normtwo{2\design^{\top}(\outcome-\design\regularizedestimator{\nameridge}{\ridgetuning})} = 2\ridgetuning\normtwo{\regularizedestimator{\nameridge}{\ridgetuning}},
\end{equation*}
we have 
\begin{equation*}
   \frac{\tuning}{2\normtwo{\regularizedestimator{\nameridge}{\ridgetuning}}}=\ridgetuning
\end{equation*}
and we finally conclude  that $\regularizedestimator{\nameridge}{\ridgetuning} = \regularizedestimator{\nameedr}{\tuning}$ when  $\frac{\tuning}{2\normtwo{\regularizedestimator{\nameridge}{\ridgetuning}}}=\ridgetuning. $
\end{proof}

\section{Beyond Orthogonality}
\label{appendix:orthonormal}
To avoid digression, we have restricted the theories in the main body of the paper to orthonormal design matrices.
However, there are straightforward extensions along established lines in high-dimensional theory.
In general, the influence of correlation on regularized estimation has been studied extensively---see, for example, \cite{DalalyanHebiriLederer} and~\cite{Hebiri2013} for the lasso case.
The most straightforward extension of our theories goes via the $\ell_\infty$-restricted eigenvalue introduced in~\cite{Chichignoud2016}.
This condition allows for design matrices, that satisfy $|\!|\design^\top\design\boldsymbol{\delta}|\!|_\infty\gtrsim |\!|\boldsymbol{\delta}|\!|_\infty$ for certain $\boldsymbol{\delta}$.
We omit the details; importantly, our simulations demonstrate that our method provides accurate prediction far beyond orthonormal design.
\section{Beyond Orthogonality}
\label{appendix:orthonormal}



\end{document}